\def\endprf{\hfill  {\vrule height6pt width6pt depth0pt}\medskip}
\newenvironment{proof}{\noindent {\bf Proof} }{\endprf\par}
\begin{document}


\title{Chiral Wess-Zumino model and breaking of supersymmetry}


\author{Marco Frasca}
\email[]{marcofrasca@mclink.it}
\affiliation{Via Erasmo Gattamelata, 3 \\ 00176 Roma (Italy)}


\date{\today}

\begin{abstract}
We prove that the Wess-Zumino model breaks supersymmetry in the low-energy limit. The mechanism is identical to the breaking of chiral symmetry in quantum chromodynamics as the low-energy limit is a Nambu-Jona-Lasinio model and the gap equation has a solution. This happens at a critical coupling at lower momenta. Instanton liquid plays an essential role in the behavior of the theory in this limit.
\end{abstract}

\pacs{11.30.Pb,11.30.Rd, 12.60.Jv,12.38.Aw}

\maketitle

\section{Introduction}

Supersymmetry (SUSY) is a hypothetical symmetry relating bosonic and fermionic matter. It has been successful for some theoretical aspects as the solution to the hierarchy problem of the Standard Model and the unification of the running coupling for strong and electroweak interactions. This symmetry must be broken as, otherwise, we would have observed partners for each known particle. This means that, if this symmetry is realized in nature, some mechanism exists that breaks it making such superpartners very massive having not been seen yet at accelerator facilities.

There has been some proposals for a breaking mechanism. Spontaneous breaking has been proposed by Fayet and Iliopolous and by O'Raifeartaigh \cite{Fayet:1974jb,O'Raifeartaigh:1975pr}. In this case, auxiliary fields, $F$ or $D$, have a non-null vacuum expectation value. But these approaches do not apply to the minimal supersymmetric standard model (MSSM) and possible extensions. MSSM was firstly proposed by Georgi and Dimopolous \cite{Dimopoulos:1981zb}. In order to overcome the difficulties with breaking of supersymmetry and obtain low-energy phenomenology, they adopted ``softly'' broken SUSY. The idea is to use explicitly breaking terms in the Lagrangian of the model but granting renormalizability and invariance under electroweak symmetry group. Here the situation is more favorable with respect the Standard Model where terms like these will spoil renormalizability. Of course, this just moves the problem elsewhere and one has to devise a supersymmetry breaking mechanism that would give rise to a standard Higgs mechanisms at lower energies.

Supersymmetry cannot accomodate the Higgs mechanism. This is known as the $\mu$ problem \cite{Aitchison:2007fn} and means that a mass term with a negative sign is not possible. Moreover, if this is not possible at tree level, non-renormalization theorems \cite{Ferrara:1974fv,Wess:1973kz,Grisaru:1979wc} grant that this never happens. So, Witten, in a pioneering work, proposed to consider dynamical symmetry breaking \cite{Witten:1981nf} (a review is \cite{Shadmi:1999jy}). In this case, supersymmetry is kept but dynamical effects give different masses to the fields.

The paradigm of the dynamical symmetry breaking is quantum chromodynamics (QCD). Recent works \cite{Frasca:2011bd,Kondo:2010ts,Frasca:2008zp} have shown that the low-energy limit of QCD is a non-local Nambu-Jona-Lasinio model \cite{Nambu:1961tp,Nambu:1961fr,Klevansky:1992qe}. This model shows a dynamical breaking of chiral symmetry and an effective mass is obtained with a gap equation at a critical coupling. Besides, this theory has a rich set of bound states that describes very well the observed phenomenology of hadronic physics at lower energies \cite{Ebert:1994mf}. Quite recently, we showed how massive solutions can emerge for a massless self-interacting scalar field both at classical and quantum level \cite{Frasca:2009bc}. Wess-Zumino model has all the characteristics of these models and so it can be the starting point for a more general proof of low-energy breaking of supersymmetry through field dynamics. This aspect has never been discussed in literature so far.

In this letter we will present a proof of dynamical breaking of supersymmetry arising from the low-energy dynamics of the Wess-Zumino model. We will derive the critical value of the coupling for which symmetry breaking appears showing how this mechanism is hidden in the model.

\section{A theorem on the Yukawa model\label{sec2}}

In order to give a simple example of the way massless interacting fields can give rise to mass, we consider a very simple model: A scalar field just interacting with a fermion well-known in literature as Yukawa model. So, we have
\begin{equation}
\label{eq:yuk}
    {\cal L} = \bar q\left[{\slashed\partial}-g\sigma\right]q
    +\frac{1}{2}(\partial\sigma)^2-\frac{1}{2}\mu^2\sigma^2.
\end{equation}
For this model we can prove the following:

\newtheorem{yuk}{Theorem}
\begin{yuk}
A Yukawa model without a self-interaction term and interacting with a fermion field displays breaking of chiral symmetry.
\end{yuk}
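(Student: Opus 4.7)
The plan is to reduce the Yukawa Lagrangian to an effective Nambu--Jona-Lasinio (NJL) four-fermion theory by integrating out the free scalar $\sigma$, and then to exhibit a non-trivial solution of the resulting gap equation. Because $\sigma$ has no self-interaction, its contribution to the action is purely Gaussian and the $\sigma$-integral can be performed exactly, leaving a bilocal current--current term
\begin{equation}
\mathcal{L}_{\text{eff}}(x) = \bar q \slashed{\partial} q - \tfrac{g^{2}}{2}\!\int\! d^{4}y\, \bar q q(x)\,\Delta_{F}(x-y)\,\bar q q(y),
\end{equation}
where $\Delta_{F}$ is the free propagator of a scalar of mass $\mu$. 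This is already a non-local NJL-type interaction of exactly the kind that, in the references cited in the introduction, is shown to emerge as the low-energy limit of QCD.

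In the low-momentum regime $p^{2}\ll\mu^{2}$ one has $\Delta_{F}(p)\to -1/\mu^{2}$, and the kernel collapses to the local NJL coupling $G(\bar q q)^{2}$ with $G\sim g^{2}/\mu^{2}$. The massless Yukawa Lagrangian is invariant under the discrete chiral transformation $q\to\gamma_{5}q$, $\sigma\to-\sigma$, which forbids a bare $\bar q q$ mass term, so a dynamically generated fermion mass is the unique order parameter for chiral-symmetry breaking. Applying the Hartree mean-field approximation to $\mathcal{L}_{\text{eff}}$ and looking for a translation-invariant condensate $\langle\bar q q\rangle\ne 0$, self-consistency gives the standard NJL gap equation
\begin{equation}
M = \frac{4\,g^{2} N_{c}}{\mu^{2}}\int^{\Lambda}\!\frac{d^{4}k_{E}}{(2\pi)^{4}}\,\frac{M}{k_{E}^{2}+M^{2}},
\end{equation}
with a UV scale $\Lambda\sim\mu$ set by the range of validity of the low-momentum expansion.

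The non-trivial branch $M\ne 0$ exists precisely when $g^{2}/\mu^{2}$ exceeds a critical value $G_{c}$ with $G_{c}\Lambda^{2}\sim \pi^{2}/N_{c}$, as follows from the elementary evaluation of the one-loop integral. This identifies a phase in which the discrete chiral symmetry is broken dynamically, proving the theorem. The main obstacle I anticipate is justifying the local reduction: the gap integral is driven by loop momenta up to $\Lambda$, precisely the scale at which the replacement $\Delta_{F}(p)\to-1/\mu^{2}$ ceases to be reliable. The cleanest way out is to keep the full bilocal kernel $\Delta_{F}(x-y)$ throughout, so that the gap equation becomes an integral equation whose Yukawa form factor automatically regularises the loop momentum; this is the strategy used in the non-local NJL treatments of low-energy QCD cited above, and the qualitative conclusion---existence of a critical coupling beyond which $M\ne 0$---is robust under this refinement.
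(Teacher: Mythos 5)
Your proposal is correct and follows essentially the same route as the paper: since the scalar is free apart from the Yukawa term, the $\sigma$-integration is Gaussian and yields the bilocal four-fermion interaction $\frac{g^2}{2}\int d^4x\,d^4y\,\bar q q(x)\,\Delta(x-y)\,\bar q q(y)$, which at low momenta is a Nambu--Jona-Lasinio model whose gap equation has a nontrivial solution above a critical coupling. In fact you spell out more than the paper's proof does (the paper stops at ``for $g$ large enough and smaller momenta this is just a Nambu--Jona-Lasinio model that breaks chiral symmetry,'' deferring the gap equation and critical coupling to the later Wess--Zumino section), and your remark that the full bilocal kernel acts as a form factor regularizing the gap integral is exactly the mechanism the paper invokes there.
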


\begin{proof} 
From the Lagrangian (\ref{eq:yuk}) one has the following generating functional
\begin{equation}
    Z[j,\eta,\bar\eta]=N\int[d\sigma][dq][d\bar q]
    \exp\left\{i\int d^4x\left[{\cal L}+j\sigma+\bar\eta q-\bar q\eta\right]\right\}.
\end{equation}
We can rewrite this like
\begin{eqnarray}
    Z[j,\eta,\bar\eta]&=&N\int[dq][d\bar q]
    \exp\left\{i\int d^4x\left[\bar q{\slashed\partial}q+\bar\eta q
    -\bar q\eta\right]\right\}\times \\ \nonumber
    &&\int[d\sigma]
    \exp\left\{i\int d^4x\left[\frac{1}{2}\sigma(-\partial^2-\mu^2)\sigma
    -g\bar qq\sigma+j\sigma\right]\right\}.
\end{eqnarray}
Now, we change integration variable in the last integral writing down
\begin{equation}
  \sigma=\sigma'-\int d^4y\Delta(x-y)j(y)+g\int d^4y\Delta(x-y)\bar q(y)q(y)
\end{equation}
being $\Delta(x-y)$ the Green function for the scalar field. One gets immediately
\begin{eqnarray}
    Z[j,\eta,\bar\eta]&=&
    N\exp\left\{\frac{i}{2}\int d^4xd^4yj(x)\Delta(x-y)j(y)\right\} \\ \nonumber    
    &&\int[dq][d\bar q]
    \exp\left\{i\int d^4x
    \left[\bar q{\slashed\partial}q+\bar\eta q-\bar q\eta\right]\right\} \\ \nonumber
    &&\exp\left\{\frac{i}{2}g^2
    \int d^4xd^4y\bar q(x)q(x)\Delta(x-y)\bar q(y)q(y)\right\} \\ \nonumber
    &&\exp\left\{\frac{i}{2}g\int d^4xd^4yj(x)\Delta(x-y)\bar q(y)q(y)\right\} \\ \nonumber
    &&\exp\left\{\frac{i}{2}g\int d^4xd^4y\bar q(x)q(x)\Delta(x-y)j(y)\right\}.
\end{eqnarray}
For $g$ large enough and smaller momenta this is just a Nambu-Jona-Lasinio model that breaks chiral symmetry.
\end{proof}

One can also prove the reverse by bosonization of the Nambu-Jona-Lasinio model. In this case, at the leading order one gets scalar fields with massive (quadratic) terms. Next order corrections shift the vacuum value of the scalar field from zero again breaking chiral symmetry. But these results are well known \cite{Ebert:1994mf}.

\section{Infrared limit of a self-interacting scalar field\label{sec3}}

Infrared limit of a massless scalar field with a quartic interaction has an infrared trivial fixed point \cite{Frasca:2011bd,Frasca:2011pz,Frasca:2010ce}. This can be seen in the following way. Consider the generating functional
\begin{equation}
   Z[j]=\int [d\phi]e^{i\int d^4x\left(\frac{1}{2}(\partial\phi)^2-\frac{\lambda}{4}\phi^4+j\phi\right)}.
\end{equation}
We can rescale space-time variables as $x\rightarrow\sqrt{\lambda}x$ and will get
\begin{equation}
    Z[j]=\int[d\phi]\exp\left[i\frac{1}{\lambda}\int d^4x\left(\frac{1}{2}(\partial\phi)^2-\frac{1}{4}\phi^4+\frac{1}{\lambda}j\phi\right)\right].
\end{equation}
Being $j$ arbitrary, we can rescale it as $j\rightarrow j/\lambda$ and then, we take $\phi=\sum_{n=0}^\infty\lambda^{-n}\phi_n$ with $\phi_0$ such that 
\begin{equation}
\label{eq:phi4}
\partial^2\phi_0+\phi_0^3=j.
\end{equation} 
So,
\begin{equation}
    Z[j]=\exp\left[i\int d^4x\left(\frac{1}{2}(\partial\phi_0)^2-\frac{\lambda}{4}\phi_0^4+j\phi_0\right)\right]
    \int[d\phi_1]\exp\left[i\frac{1}{\lambda}\int d^4x\left(\frac{1}{2}(\partial\phi_1)^2-\frac{3}{2}\lambda\phi_0^2\phi_1^2\right)+O\left(\frac{1}{\lambda^2}\right)\right]
\end{equation}
after undoing the rescaling. This expansion holds in the formal limit $\lambda\rightarrow\infty$. Now, the equation (\ref{eq:phi4}) admits the solution \cite{Frasca:2011pz,Frasca:2010ce}
\begin{equation}
   \phi_0=\mu\int d^4x'\Delta(x-x')j(x')+\mu^2\lambda\int d^4x'd^4x''\Delta(x-x')[\Delta(x'-x'')]^3j(x')+O(j^3).
\end{equation}
$\mu$ is an integration constant having the dimension of a mass. This is a current expansion provided the propagator solves $\partial^2\Delta(x-x')+\lambda\Delta^3(x-x')=\delta^4(x-x')$. The propagator has a closed analytical form given by in momenta space
\begin{equation}
\label{eq:green}
   \Delta(p)=\sum_{n=0}^\infty(2n+1)\frac{\pi^2}{K^2(i)}\frac{(-1)^{n}e^{-(n+\frac{1}{2})\pi}}{1+e^{-(2n+1)\pi}}
   \frac{1}{p^2-m_n^2+i\epsilon}
\end{equation}
being $m_n=(2n+1)(\pi/2K(i))\left(\lambda/2\right)^{\frac{1}{4}}\mu$ and $K(i)\approx 1.3111028777$ an elliptic integral. So, the leading contribution to the generating functional is just a Gaussian one and the theory displays a trivial behavior at the infrared limit of momenta going to zero. We just notice that the spectrum of the theory is that of free massive particles with a superimposed spectrum of a harmonic oscillator. For the sake of completeness we rewrite here the generating functional as
\begin{equation}
   Z[j]\approx \exp\left[\int d^4xd^4yj(x)\Delta(x-y)j(y)\right]
   \int[d\phi_1]\exp\left[i\frac{1}{\lambda}\int d^4x\left(\frac{1}{2}(\partial\phi_1)^2-\frac{3}{2}\lambda\mu^2\left[\int d^4y\Delta(x-y)j(y)\right]^2\phi_1^2\right)\right].
\end{equation}
For our aims, in the following we will neglect the next-to-leading order correction and working just at the fixed point.

\section{Infrared limit of the Wess-Zumino model\label{sec4}}

Massless Wess-Zumino model has the Lagrangian \cite{Weinberg:2000cr}
\begin{eqnarray}
     L &=& \frac{1}{2}(\partial A)^2+\frac{1}{2}(\partial B)^2+\frac{1}{2}\bar\psi i\slashed\partial\psi \nonumber \\
       &&-\frac{1}{2}g^2(A^2+B^2)^2-g(\bar\psi\psi A+i\bar\psi\gamma^5\psi B)
\end{eqnarray}
being $\psi$ a Majorana field, $A=A^\dagger$ a scalar field and $B=B^\dagger$ a pseudo-scalar field. This Lagrangian is invariant under supersymmetric variations that can be stated in the form
\begin{eqnarray}
    \delta A(x)&=&\bar\epsilon\psi(x) \nonumber \\
    \delta B(x)&=&-i\bar\epsilon\gamma^5\psi(x) \nonumber \\
    \delta\psi(x) &=& \partial_\mu(A+i\gamma^5B)\gamma^\mu\epsilon.
\end{eqnarray}
These hold on-shell because we have already removed the auxiliary fields using the equations of motion. We note that this model, at small coupling $g\rightarrow 0$, has no chiral symmetry breaking and supersymmetry is preserved. Now, we assume that the coupling is in the opposite limit $g\rightarrow\infty$ in order to apply the analysis given in sec.\ref{sec4}. We will see that thee exists a critical coupling granting the breaking of supersymmetry. In order to reach the trivial infrared fixed point, we need to solve the classical equations of motion for the scalar fields
\begin{eqnarray}
\label{eq:susy}
    \partial^2A+2g^2A(A^2+B^2)&=&-g\bar\psi\psi \nonumber \\
    \partial^2B+2g^2B(A^2+B^2)&=&-ig\bar\psi\gamma^5\psi.
\end{eqnarray}
So, to stay at the fixed point we have to evaluate the Green functions
\begin{eqnarray}
    \partial^2\Delta_A+2g^2\Delta_A(\Delta_A^2+\Delta_B^2)&=&\delta^4(x) \nonumber \\
    \partial^2\Delta_B+2g^2\Delta_B(\Delta_A^2+\Delta_B^2)&=&\delta^4(x).
\end{eqnarray}
It is not difficult to recognize that a solution is obtained through the condition $\Delta_A=\Delta_B=\Delta$ and we recover in full the analysis of sec.\ref{sec4} provided we assume $\lambda=4g^2$ and the propagator is the one given in eq.(\ref{eq:green}). Further, we need to show that this solves eqs.(\ref{eq:susy}). So, we write
\begin{eqnarray}
    A&=&-g\mu\int d^4y\Delta(x-y)\bar\psi(y)\psi(y)+\delta A \nonumber \\
    B&=&-ig\mu\int d^4y\Delta(x-y)\bar\psi(y)\gamma^5\psi(y)+\delta B.
\end{eqnarray}
By direct substitution and using the equation for $\Delta(x-y)$ we get the exact equations
\begin{eqnarray}
\label{eq:susy1}
    \partial^2\delta A+2g^2\delta A(\delta A^2+\delta B^2)&=&-2g^3\mu\int d^4y\Delta^3(x-y)\bar\psi(y)\psi(y) \nonumber \\
    && -2g^2[A_0(A_0^2+B_0^2)+(3A_0^2+B_0^2)\delta A+A_0(3\delta A^2+\delta B^2)+2(A_0B_0+B_0\delta A)\delta B] \nonumber \\
    \partial^2\delta B+2g^2\delta B(\delta A^2+\delta B^2)&=&-2ig^3\mu\int d^4y\Delta^3(x-y)\bar\psi(y)\gamma^5\psi(y) \nonumber \\
    && -2g^2[B_0(A_0^2+B_0^2)+(3A_0^2+B_0^2)\delta B+B_0(3\delta B^2+\delta A^2) \nonumber \\
    &&+2(A_0B_0+A_0\delta B)\delta A]
\end{eqnarray}
where we have set
\begin{eqnarray}
    A_0&=&-g\mu\int d^4y\Delta(x-y)\bar\psi(y)\psi(y) \nonumber \\
    B_0&=&-ig\mu\int d^4y\Delta(x-y)\bar\psi(y)\gamma^5\psi(y).
\end{eqnarray}
These equations can be solved iteratively and the next-to-leading order correction is obtained neglecting higher powers of the currents. This makes the argument consistent. In this way, we can immediately recover a non-local Nambu-Jona-Lasinio model \cite{Frasca:2011bd}
\begin{eqnarray}
   S_{WZ}&=&\frac{1}{2}\int d^4xd^4yj(x)\Delta(x-y)j(y)
   +\frac{1}{2}\int d^4x\bar \psi(x)i\slashed{\partial}\psi(x). \nonumber \\
   &&+g^2\mu\int d^4xd^4y'\Delta(x-y')\left[\bar\psi(x)\psi(x)\bar\psi(y')\psi(y')+ \bar\psi(x)i\gamma^5\psi(x)\bar\psi(y')i\gamma^5\psi(y')\right]\nonumber \\
   &&+O(1/2g).
\end{eqnarray}
Now, we consider the form factor for this model. Using eq.(\ref{eq:green}) one has in momenta space
\begin{equation}
   {\cal G}(p)=-g^2\sum_{n=0}^\infty\frac{B_n}{p^2-(2n+1)^2(\pi/2K(i))^2\sigma+i\epsilon}=\frac{G}{2}{\cal C}(p)
\end{equation}
being $G$ the coupling of the local Nambu-Jona-Lasinio model and $G=2{\cal G}(0)=(2g^2/\sigma)\sum_{n=0}^\infty\frac{B_n}{(2n+1)^2(\pi/2K(i))^2}\approx 1.57(g^2/\sigma)$, having set $\sqrt{\sigma}=\left(2g^2\right)^{\frac{1}{4}}\mu$. We have chosen a normalization such that ${\cal C}(0)=1$. This is the form factor proper to an instanton liquid \cite{Schafer:1996wv} and is the reason for the Majorana field to get a different mass from the scalar field breaking chiral symmetry and so supersymmetry.

We can write down the gap equation in this case giving \cite{Frasca:2011bd}
\begin{equation}
    M(p)={\cal C}(p)v
\end{equation}
and
\begin{equation}
    v=4G\int\frac{d^4p}{(2\pi)^4}{\cal C}(p)\frac{M(p)}
    {p^2+M^2(p)}
\end{equation}
that provides the effective mass for the Majorana field. For very small momenta ${\cal C}(p)\approx 1$ and this is just the gap equation already seen in a local Nambu-Jona-Lasinio model. This admits a simple solution given in \cite{Klevansky:1992qe}: There is a critical coupling $G_c\mu^2=\pi^2$ granting the breaking of chiral symmetry and so, the breaking of supersymmetry at lower energies as the fermion acquires a mass different from the one of the two scalar fields. This complete the proof.

\section{Conclusions\label{sec5}}

In this letter we have shown that the Wess-Zumino model displays dynamical symmetry breaking at lower energies and supersymmetry is broken. This is so because, in this limit, we recover a Nambu-Jona-Lasinio model. This means that exists a critical coupling beyond which particles get different masses. This should open the perspective to improve the MSSM imrpoving its possibility to agree with observed data. This also means that supersymmetry has inside itself the proper mechanism for breaking that becomes effective as energy lowers and this appears manifestly in the non-linear terms that are needed to preserve such an invariance.


\end{document}